\else \usepackage{latexsym}\fi
\newtheorem{theorem}{Theorem}[section]
\title[Draft of Relation of Web Service Orchestration, Abstract Process, Web Service and Choreography]
      {Relation of Web Service Orchestration, Abstract Process, Web Service and Choreography}
\author[Yong Wang]
    {Yong Wang\\
     College of Computer Science and Technology,\\
     Faculty of Information Technology,\\
     Beijing University of Technology, Beijing, China\\
     }
\begin{document}
\label{firstpage}

\makecorrespond

\maketitle

\begin{abstract}
We refine the relation of Web service orchestration, abstract process, Web service, and Web service choreography in Web service composition, under the situation of cross-organizational corporation. We also introduce the formal verification process of this relation through an example.
\end{abstract}

\begin{keywords}
Web Service; Web Service Orchestration; Web Service Choreography; Abstraction Process
\end{keywords}

\section{Introduction}\label{int}

Web service composition is a core mechanism in Web service research domain, it can compose component Web services (WSs) into bigger granularity of WSs. There are two aspects of Web service composition: one is Web service orchestration (WSO) which usually has an abstract process (AB), the other is Web service choreography (WSC). In a framework of formal model of Web service composition \cite{FWSC}, we designed an architecture of Web service composition under the situation of business to business (B2B) corporation. In this paper, we refine this architecture to clarify the relation of WSO, AB, WS, and WSC, and introduce the formal verification process of this relation by use of truly concurrent process algebra APTC \cite{APTC} through an example.

This paper is organized as follows. We do not introduce the preliminaries on truly concurrent process algebra APTC and Web service composition, please refer to \cite{APTC} and \cite{FWSC} for details. In section \ref{rel}, we refine the relation of WSO, AB, WS, and WSC. We introduce the formal verification process of the relation by APTC through an example in section \ref{ver}. Finally, in section \ref{con}, we conclude this paper.

\section{Relation of WSO, AB, WS and WSC}\label{rel}

The relation among WSO, AB, WS and WSC are illustrated in Fig. \ref{relation}. It was firstly occurred in a formal model of Web service composition \cite{FWSC}, and we detail this relation here.

\begin{enumerate}
  \item WSO: A WSO is an orchestration of workflow activities, including internal activities, activities interacting with internal applications, activities interacting with WSs outside via its interface WS. WSO locates in the interior of organization and usually is executed with in WSO engine;
  \item AB: An abstract process is an abstraction of a WSO by eliminating the inner activities and activities interacting with inner applications, it is only a description of  the interface of a WSO with virtual control flow and data flow, and it can not usually be executed. An AB corresponds to its interface WS exactly with each virtual activity corresponding to a WS operation in its interface WS;
  \item WS: A WS is the real interface of a WSO interacting with other WSOs or WSs outside, and it is located in the DMZ and may be equipped within a Web (Service) server;
  \item WSC: A WSC is the specification of interactions between WSs, it may have an entity or just act as a contract of involved WSs.
\end{enumerate}

\begin{figure}
  \centering
  \includegraphics{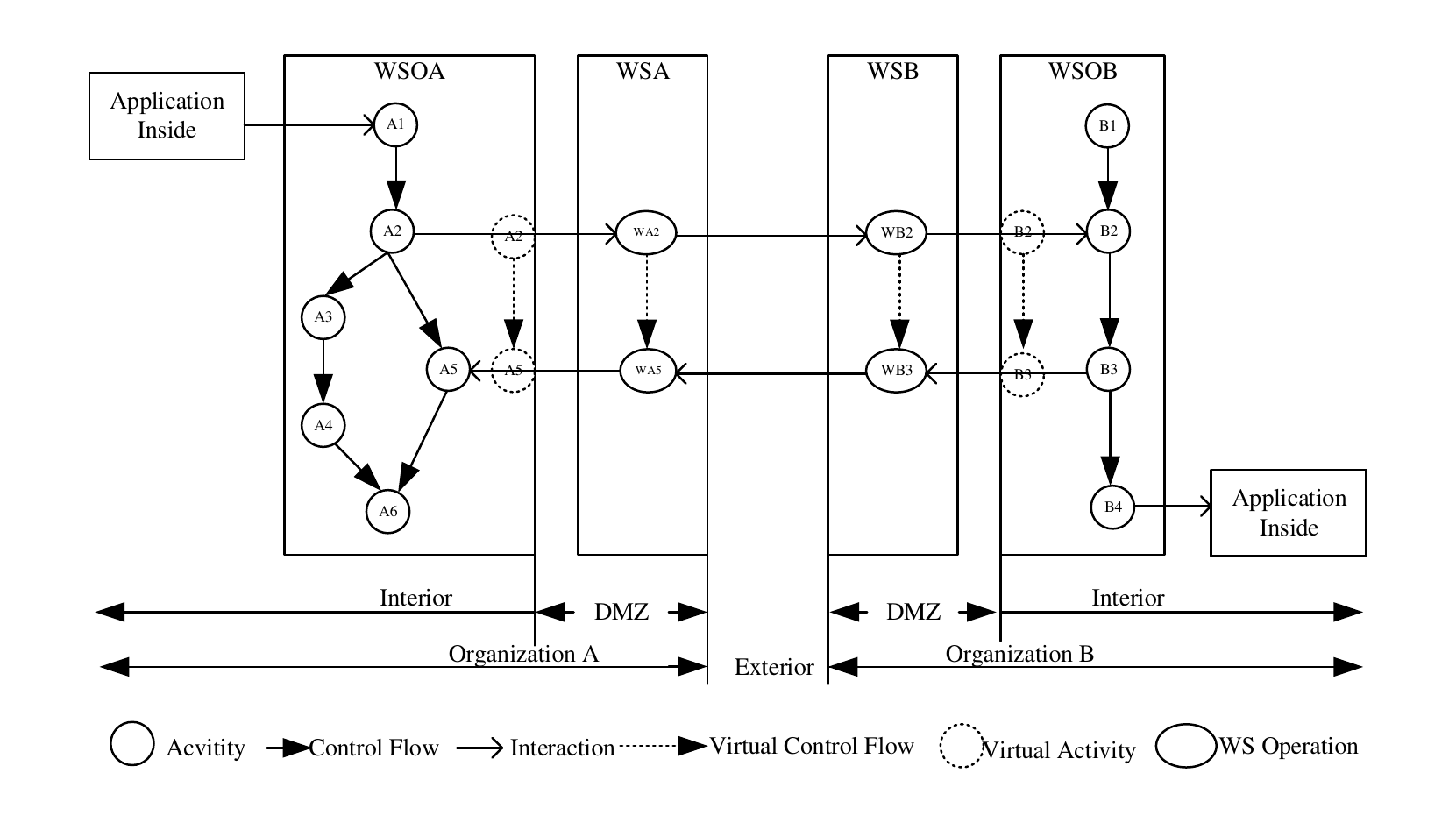}
  \caption{Relation of WSO, AB, WS and WSC.}
  \label{relation}
\end{figure}

\section{Verification of the Relation}\label{ver}

The above relation can be formally verified by truly concurrent process algebra APTC \cite{APTC}. The verification process is following.

\begin{enumerate}
  \item Giving the APTC descriptions of WSOs, WSs;
  \item By abstracting the internal activities, activities interacting with internal applications, the ABs can be obtained;
  \item Putting WSOs, WSs in parallel, treating interactions between WSO and WS, WS and WS, WS and WSO as internal actions, and by use of recursion, we can verify the whole system if is correct.
\end{enumerate}

We take the example illustrated in Fig. \ref{relation}, and verify its correctness.

The APTC description of WSOA is following.

$WSOA=\sum_{d\in \Delta}A1(d)\cdot WSOA1$

$WSOA1=A2\cdot WSOA2$

$WSOA2=((A3\cdot A4)\parallel A5)\cdot WSOA3$

$WSOA3=A6\cdot WSOA$

By defining $I=\{A1(d),A3, A4, A6\}$, we can get $ABA=\tau_I(WSOA)=A2\cdot A5$.

The APTC description of WSA is as follows. Note that, to make the interacting action in the same causal depth, we use the shadow constant.

$WSA=\circledS_{A1}\cdot WSA1$

$WSA1=WA2\cdot WSA2$

$WSA2=WA5\cdot WSA3$

$WSA3=\circledS_{A6}WSA$

The APTC description of WSOB is following.

$WSOB=B1\cdot WSOB1$

$WSOB1=B2\cdot WSOB2$

$WSOB2=B3\cdot WSOB3$

$WSOB3=\sum_{d'\in\Delta}B4(d')\cdot WSOB$

By defining $I=\{B1,B4\}$, we can get $ABB=B2\cdot B4$.

The APTC description of WSB is as follows. Note that we also use the shadow constant for the same reason.

$WSB=\circledS_{B1}\cdot WSB1$

$WSB1=WB2\cdot WSB2$

$WSB2=WB3\cdot WSB3$

$WSB3=\circledS_{B4}\cdot WSB$

We define the following communication function, and the communications between any other two atomic actions will cause deadlock.

$\gamma(A2, WA2)\triangleq c(A2, WA2)$

$\gamma(A5, WA5)\triangleq c(A5, WA5)$

$\gamma(B2, WB2)\triangleq c(B2, WB2)$

$\gamma(B3, WB3)\triangleq c(B3, WB3)$

$\gamma(WA2, WB2)\triangleq c(WA2, WB2)$

$\gamma(WA5, WB3)\triangleq c(WA5, WB3)$

Then, we define $I=\{A2, A3, A4, A5, A6, \circledS_{A6}, WA2, WA5, WB2, WB3, B1, \circledS_{B1}, B2, B3, \\ c(A2, WA2), c(A5, WA5), c(B2, WB2), c(B3, WB3), c(WA2, WB2), c(WA5, WB3)\}$

We can get the system $\tau_I(\partial_H(\Theta(WSOA\between WSA\between WSB\between WSOB)))$, and the following result.

\begin{theorem}
The system $\tau_I(\partial_H(\Theta(WSOA\between WSA\between WSB\between WSOB)))$ has desired external behaviors.
\end{theorem}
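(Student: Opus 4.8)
The plan is to compute the composite term $\tau_I(\partial_H(\Theta(WSOA\between WSA\between WSB\between WSOB)))$ step by step, using the axioms of APTC for parallelism, encapsulation, the priority operator $\Theta$, and abstraction, and then compare the resulting transition system (modulo rooted branching truly concurrent bisimulation, which is the notion of ``desired external behavior'' in APTC) against the expected interface behavior dictated by $ABA$ and $ABB$. First I would expand each recursive specification so that one full cycle of the four components is laid out as a finite sequence of atomic actions followed by a recursive call, so the whole system is a single guarded recursive equation. Then I would apply the expansion law for $\between$ to interleave the four strands, using the shadow constants $\circledS_{A1}, \circledS_{A6}, \circledS_{B1}, \circledS_{B4}$ to keep the communicating actions at the same causal depth so that the intended synchronizations $A2\mid WA2$, $A5\mid WA5$, $B2\mid WB2$, $B3\mid WB3$, $WA2\mid WB2$, $WA5\mid WB3$ actually line up.

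Next I would apply $\partial_H$ with $H$ the set of all the individual action labels that are meant only to occur in a communication (all the $A_i$, $WA_i$, $WB_i$, $B_i$ listed, together with their shadows), so that every such action is blocked unless it appears inside the corresponding $\gamma$-communication $c(\cdot,\cdot)$; any mismatched pairing is sent to $\delta$ by the stipulation that all other communications deadlock, and those deadlock summands are absorbed by the $A6$/$x+\delta=x$ style axioms. Then $\Theta$ resolves any remaining nondeterministic choices in favor of the communication actions over bare (blocked) ones; since after $\partial_H$ the only surviving non-$\delta$ branches are the communications, $\Theta$ essentially just cleans up. Finally I would apply $\tau_I$ with the given $I$ — which contains exactly all the internal activities $A3,A4$, the shadows, and every communication action $c(\cdot,\cdot)$ — so that all of these become $\tau$ and get absorbed by the branching-bisimulation laws $\tau\cdot x \approx$ (under appropriate conditions) and $x\cdot\tau \approx x$, leaving only the activities not in $I$. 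Tracking which activities remain, the external behavior should reduce to the parallel composition of the two abstract processes $ABA = A2\cdot A5$ and $ABB = B2\cdot B4$ — i.e. exactly the interface-level view — closing the loop with the earlier computations $\tau_I(WSOA)=A2\cdot A5$ and $ABB=B2\cdot B4$, and confirming that orchestration-plus-interface behaves externally like the choreography of the two abstract processes.

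The main obstacle I expect is the bookkeeping of causal depth in the truly concurrent setting: in APTC, $\between$ is genuine parallelism, not interleaving, so the expansion produces both pure parallel (``$\parallel$'') summands and communication (``$\mid$'') summands, and I must check that the shadow constants are placed so that at every synchronization point the two partners are genuinely at the same depth — otherwise $\partial_H$ would block one of them with no matching partner and the system would deadlock instead of progressing. A second, related delicate point is verifying that after abstraction the $\tau$-transitions introduced really are inert for rooted branching truly concurrent bisimulation — one has to confirm no $\tau$ is discarding a branching possibility (the CFAR / cluster-fair-abstraction condition for the recursive loop), so that the loop of internal communications collapses cleanly to the intended external cycle. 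Once those two checks go through, the remaining algebra is routine rewriting with the APTC axioms, and the statement follows by exhibiting the bisimulation between the reduced term and $ABA \between ABB$.
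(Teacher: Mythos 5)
Your overall method is the one the paper uses — expand the guarded recursive specifications, apply the expansion law for $\between$ with the shadow constants aligning causal depths, then $\partial_H$, $\Theta$, and $\tau_I$, and conclude via the branching-bisimulation/abstraction machinery of APTC — but your final identification of the ``desired external behavior'' is wrong, and it is inconsistent with the very set $I$ you quote. The whole-system abstraction set $I$ contains $A2, A5, B2, B3$ (and $A6, B1$, the $WA$/$WB$ actions, the shadows $\circledS_{A6}, \circledS_{B1}$, and all the communications $c(\cdot,\cdot)$), so none of $A2, A5, B2$ can survive abstraction; hence the reduced system cannot be $ABA\between ABB=(A2\cdot A5)\between(B2\cdot B4)$. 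The only atomic actions \emph{not} in $I$ are the boundary actions $A1(d)$ and $B4(d')$ (together with their shadows $\circledS_{A1},\circledS_{B4}$, which merge with them by the shadow axioms), and the correct conclusion — the one the paper derives — is the recursive equation $\tau_I(\partial_H(\Theta(WSOA\between WSA\between WSB\between WSOB)))=\sum_{d\in\Delta}\sum_{d'\in\Delta}A1(d)\cdot B4(d')\cdot \tau_I(\partial_H(\Theta(WSOA\between WSA\between WSB\between WSOB)))$. In other words, the verification target is the end-to-end cycle ``receive a request $A1(d)$, eventually deliver a result $B4(d')$, repeat,'' not the parallel composition of the two abstract processes: the ABs are obtained at an earlier stage with \emph{different, per-component} abstraction sets, and you have conflated that component-level abstraction with the system-level one.

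A secondary problem is your choice of $H$: you propose encapsulating ``all the $A_i$, $WA_i$, $WB_i$, $B_i$ listed, together with their shadows.'' That is too large. $A1(d)$, $A3$, $A4$, $A6$, $B4(d')$ and the shadow constants participate in no $\gamma$-communication (shadows are removed by the shadow axioms, not by $\gamma$), so blocking them under $\partial_H$ would deadlock the system immediately rather than let it progress to the intended cycle. $H$ must contain only the actions that are required to synchronize, namely $A2, WA2, A5, WA5, B2, WB2, B3, WB3$ (and the pairings across WSA/WSB). With $H$ and the target corrected, the rest of your outline — checking depth alignment of the synchronizations and the inertness of the introduced $\tau$'s for the recursive loop — is indeed the substance of the paper's (much terser) deduction.
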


\begin{proof}
By use of the axioms of APTC, deducting on the term $\tau_I(\partial_H(\Theta(WSOA\between WSA\between WSB\between WSOB)))$, we can get

$\tau_I(\partial_H(\Theta(WSOA\between WSA\between WSB\between WSOB)))\\=\sum_{d\in\Delta}\sum_{d'\in\Delta}A1(d)\cdot B4(d')\cdot \tau_I(\partial_H(\Theta(WSOA\between WSA\between WSB\between WSOB)))$
\end{proof}

\section{Conclusions}\label{con}

We give the refined relation among WSO, AB, WS and WSC based on a framework of Web service composition \cite{FWSC}, and introduce the formal verification process of the relation. Through an example illustrated in Fig. \ref{relation}, we detail the verification process.

\newpage

%

\label{lastpage}


\begin{thebibliography}{Lam94}

\bibitem{FWSC}Y. Wang. Formal Model of Web Service Composition: An Actor-Based Approach to Unifying Orchestration and Choreography. 2013. arXiv:1312.0677.

\bibitem{APTC}Y. Wang. Algebraic laws for true concurrency. 2016. arXiv: 1611.09035.

\end{thebibliography}
\end{document}